\newtheorem{definition}{Definition}
\newtheorem{theorem}{Theorem}
\newtheorem{lemma}{Lemma}
\newtheorem{corollary}{Corollary}
\newtheorem{example}{Example}
\title{A Strategy-proof Mechanism For Networked Housing Markets}
\author{
Youjia Zhang
\and
Pingzhong Tang
\affiliations
Institute for Interdisciplinary Information Sciences\\ 
Tsinghua University
\emails
zhangyou19@mails.tsinghua.edu.cn,
kenshinping@gmail.com
}
\begin{document}

\maketitle

\begin{abstract}
This paper studies a house allocation problem in a networked housing market, where agents can invite others to join the system in order to enrich their options. Top Trading Cycle is a well-known matching mechanism that achieves a set of desirable properties in a market without invitations. However, under a tree-structured networked market, existing agents have to strategically propagate the barter market as their invitees may compete in the same house with them. Our impossibility result shows that TTC cannot work properly in a networked housing market. Hence, we characterize the possible competitions between inviters and invitees, which lead agents to fail to refer others truthfully (strategy-proof). We then present a novel mechanism based on TTC, avoiding the aforementioned competition to ensure all agents report preference and propagate the barter market truthfully. Unlike the existing mechanisms, the agents' preferences are less restricted under our mechanism. Furthermore, we show by simulations that our mechanism outperforms the existing matching mechanisms in terms of the number of swaps and agents' satisfaction.
\end{abstract}

\section{Introduction}
\label{sec: introduction}

Market design has been greatly influenced by the theory of house allocation mechanisms that allow agents to express preferences over houses and trade them without monetary compensation. Such a mechanism can be applied to various areas such as kidney exchange \cite{roth2004kidney,sonmez2020incentivized}, house allocation \cite{shapley1974cores,abdulkadirouglu1999house}, and so on. Therefore, it has attracted researchers from various fields, including economics, mathematics, and computer science.

In the groundbreaking paper, \cite{shapley1974cores} first formulated the house allocation problem as a mechanism design problem and developed a well-known matching mechanism, Top Trading Cycle, which is strategy-proof (truthful reporting preference is a dominant strategy) and Pareto efficient (resources are allocated to the maximum level of efficiency). However, they considered the case in which all agents in the housing market are only invited by the organizer. With the significant improvement in communication tools, people are interacting with others more frequently and easily than ever before. It is natural to develop such a mechanism over social networks. Indeed, agents might be interested in inviting their friends to the housing markets in order to enrich their options.

The work of mechanism design over social networks has been initiated by \cite{li2017mechanism}. They revealed that increasing the number of participants can improve the revenue of an auction, which is consistent with the result of \cite{bulow1996auctions}. Taking social networks into consideration in the mechanism design problem is promising and has been developed in various fields such as resource allocation \cite{li2017mechanism}, task collaboration \cite{golle2001incentives}, matching \cite{kawasaki2021mechanism}.

An important open question in matching over a networked housing market is how to develop a mechanism that ensures agents report their information truthfully. For example, an agent might not invite his friends because they would compete for a house he prefers, which reduces other agents' options. Such an issue was first discovered by \cite{kawasaki2021mechanism}. They restricted the preference domain and found that TTC simultaneously satisfies strategy-proof and Pareto efficient under such settings; otherwise, it fails to achieve a set of properties. However, restricting the preference domains contradicts the purpose of matching mechanisms over a networked housing market, which is to enrich agents' options in order to obtain a better allocation.

This paper proposes a novel matching mechanism that ensures strategy-proof without sacrificing all agents' preference domains. Indeed, we reveal the possible competition between inviters and invitees, which leads agents to benefit from misreporting. Inspired by the success of Top Trading Cycle \cite{shapley1974cores} in traditional housing markets, we develop a matching mechanism based on TTC for networked housing markets, called Top Trading Cycle with Diffusion (TTCD). Aside from being strategy-proof, the allocation of TTCD is also stable, such that agents cannot improve from coalitions with their ancestors and descendants. We further show that TTCD has a promising number of swaps, which is preferable for organizers who charge for a swap.

The remainder of the paper is organized as follows. Section \ref{sec: literature} reviews the relevant literature. Section \ref{sec: model and preliminaries} describes the model and a set of desirable properties. In Section \ref{sec: mechanisms and impossibility}, we briefly review the existing mechanisms and discuss the impossibilities. Section \ref{sec: our mechanism} proposes a novel matching mechanism and analyzes its properties. Section \ref{sec: empirical} provides the performance of TTCD by simulations. Finally, we conclude with some closing remarks and discuss possible future research in Section \ref{sec: conclusion}.

\section{Literature Review}
\label{sec: literature}

The seminal work \cite{shapley1974cores} introduced house allocation as a mechanism design problem and proposed the Top Trading Cycle (TTC) mechanism as a solution with several desirable properties. Since then, the design of house allocation mechanisms and TTC have received much attention from both researchers and practitioners. \cite{roth1982incentive} proved that TTC is strategy-proof, individually rational, and Pareto efficient. Furthermore, \cite{ma1994strategy} verified the result of \cite{roth1982incentive} and showed that TTC is the only mechanism that satisfies all those properties.

Several variations of TTC have been studied in the literature. For instance, \cite{alcalde2011exchange} generalized the TTC algorithm to the case in which agents are allowed to report indifference in preference. \cite{morrill2015making} characterized the TTC in terms of justness, which allows students with higher priority to veto an objection. \cite{hakimov2018equitable} proposed an Equitable TTC in order to eliminate avoidable justified envy situations. 

Mechanism design over social networks has been well studied in various fields such as marketing and auction. For example, \cite{emek2011mechanisms} proposed a geometrical reward mechanism for marketing in which agents are rewarded for successfully referring others to purchase a product. \cite{li2017mechanism} introduced an auction over social networks that satisfies several important properties. For more details, readers can refer to the work of \cite{zhao2021mechanism} and the references therein. These studies demonstrate the potential of mechanism design with social networks as a valuable direction of research.

The study of mechanism design in networked housing markets was pioneered by \cite{kawasaki2021mechanism}. They revealed that it is impossible for a mechanism to be strategy-proof and Pareto efficient over a networked housing market. As a response, they proposed a modified TTC, which restricts the preference domain of all agents. \cite{gourves2017object,zheng2020barter} studied the networked housing market where agents are only allowed to trade with their neighbors. \cite{you2022strategy} modified the algorithm of \cite{abdulkadirouglu1999house} for a house allocation problem with existing tenants and social networks. Later on, \cite{yang2022one} extended the work of \cite{kawasaki2021mechanism} into a graph network and enlarged the preference domain of agents who fail to invite others. 

\section{Model and Preliminaries}
\label{sec: model and preliminaries}

Consider a house allocation problem in a social network that consists of an organizer $o$ and a set of $n$ agents $N=\{1, 2, ..., n\}$. For each agent $i \in N$, he is endowed with a house $h_{i}$, and the set of all houses is denoted as $H=\{h_{1}, h_{2}, ..., h_{n}\}$. Note that the organizer $o$ is not endowed with any house.

For each agent $i \in N$, he has a set of children $r_{i} \subseteq N$. Each agent $i \in N$, he has a strict preference $\succ_{i}$ over houses $H$, where $h_{j} \succ_{i} h_{i}$ represents that agent $i$ prefers house $j$ over house $i$. Therefore, we denote $\theta_{i}=(\succ_{i}, r_{i})$ as the type of agent $i$.

Agents are asked to report their types as part of the mechanism. We denote $\theta^{\prime}_{i}=(\succ^{\prime}_{i}, r^{\prime}_{i})$ as the report type of agent $i$ under the mechanism. Specifically, it is impossible to spread the information of the barter market to a non-existing child. Hence, $r^{\prime}_{i} \subseteq r_{i}$. Let $\theta^{\prime}=(\theta^{\prime}_{1}, \theta^{\prime}_{2}, ..., \theta^{\prime}_{n})=(\theta^{\prime}_{i}, \theta^{\prime}_{-i})$ be the reported types of all agents, and $\theta^{\prime}_{-i}$ is the reported types of all agents excluding agent $i$. We denote $\Theta=(\Pi \times r)$ be the reported type space of all agents, where $\Pi$ is the preference list and $r$ action spaces on reporting children.

For a given report profile $\theta^{\prime}$, we generate a directed graph $G(\theta^{\prime})=(V(\theta^{\prime}), E(\theta^{\prime}))$, where $V(\theta^{\prime}) \subseteq N \cup \{o\}$, and edge $e(i, j) \in E(\theta^{\prime})$ means that agent $i$ invites agent $j$ to join the barter market ($j \in r^{\prime}_{i}$). In particular, an agent can only join the market if all his ancestors are in the market and decide to invite their children. Without loss of generality, we assume that the organizer $o$ invites all his children $r_{o}$.

The organizer $o$ aims to design a mechanism with a matching policy that incentivizes agents to invite their children to join the barter market in order to provide a broader range of options for the exchange. A matching policy $x=(x_{i})_{i \in N}$ is a redistribution of houses to the agents, where $x_{i}(\theta^{\prime})\in H$ is the house allocated to agent $i$ under matching $x$. Let $X$ be the set of all possible allocations. 

Given the above settings, the networked housing market is a tuple $(N, \Pi, r)$, and the formal definition of a matching mechanism under a networked market is defined as

\begin{definition}
    The networked matching mechanism $M$ is defined by a matching policy $x: \Theta \to X$.
\end{definition}

\subsection{Properties}
\label{sec: properties}

In this section, we define a set of important properties that a matching mechanism $M$ on the social network should satisfy. All these properties are similar and inspired by related works \cite{kawasaki2021mechanism,yang2022one}.

We begin with the formal definition of individual rationality, which ensures that if all agents report truthfully, they have no loss from joining the barter market.

\begin{definition}[Individual Rationality]
    The networked matching mechanism $M$ is individually rational (IR) if $x_{i}(\theta_{i}, \theta_{-i}) \succeq_{i} h_{i}$ for all agents $i \in N$.
\end{definition}

Strategy-proof is also a desirable property for the matching mechanism, which guarantees that reporting both children and preferences truthfully is a dominant strategy for all agents.

\begin{definition}[Strategy-proof]
    The networked matching mechanism $M$ is strategy-proof (SP) if $x_{i}(\theta_{i}, \theta^{\prime}_{-i}) \succeq_{i} x_{i}(\theta_{i}^{\prime}, \theta^{\prime}_{-i})$ for all agents $i \in N$.
\end{definition}

A Pareto efficient mechanism provides an outcome such that there is no other allocation where an agent can be better off without worsening other agents.

\begin{definition}[Pareto Efficient]
    An allocation $\mu$ Pareto dominates another feasible allocation $\nu$ if the following conditions hold
    \begin{itemize}
        \item $\mu_{i} \succeq_{i} \nu_{i}$ for all $i \in N$,
        \item $\mu_{j} \succ_{j} \nu_{j}$ for some $j \in N$. 
    \end{itemize}
    The networked matching mechanism $M$ is Pareto Efficient (PE) if there are no other feasible allocations that Pareto dominates $x(\theta)$.
\end{definition}

The core property is widely used as a stability concept in cooperative game theory. The followings are the standard notions of core from the matching literature \cite{pycia2012stability}.

\begin{definition}[Blocking Coalition]
\label{def: block coalition}
    Given an allocation $x(\theta) \in X$ (with items set $H_{S} \subseteq H$), we say a set of agents $S \subseteq N$ is a blocking coalition for $x(\theta)$ if there exists an allocation $x^{\prime}(\theta) \in X$ such that 
    \begin{itemize}
        \item $x_{i}^{\prime}(\theta) \in H_{S}$ for all $i \in S$,
        \item $x_{i}^{\prime}(\theta) \succeq_{i} x_{i}(\theta)$ for all $i \in S$,
        \item $x_{j}^{\prime}(\theta) \succ_{j} x_{j}(\theta)$ for some $j \in S$.
    \end{itemize}
\end{definition}

Intuitively, agents in $S$ reallocate the house among themselves to have better allocations. Therefore, if there is no blocking coalition for an allocation, such an allocation is \textbf{stable} and belongs to the core.

\begin{definition}[Core]
    An allocation $x(\theta)$ is in the core if there exists no blocking coalition for it. 
\end{definition}

\begin{lemma}
\label{lemma: pe}
    If an allocation $x(\theta)$ is in the core, then it is also PE.
\end{lemma}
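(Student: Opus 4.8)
The plan is to prove the contrapositive: if $x(\theta)$ is not PE, then it is not in the core. This is the natural direction because the definition of a blocking coalition is a strict generalization of Pareto domination—the core condition is at least as hard to satisfy as PE, so a failure of PE should readily produce a blocking coalition.

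First I would suppose that $x(\theta)$ is not Pareto efficient. By definition, there exists a feasible allocation $x'(\theta) \in X$ that Pareto dominates $x(\theta)$, meaning $x'_i(\theta) \succeq_i x_i(\theta)$ for all $i \in N$ and $x'_j(\theta) \succ_j x_j(\theta)$ for some $j \in N$. The key idea is to exhibit a blocking coalition using this dominating allocation. I would take the grand coalition $S = N$ itself as the candidate blocking coalition. With $S = N$, the item set available to the coalition is $H_S = H$ (the full set of houses), so the feasibility constraint $x'_i(\theta) \in H_S$ is automatically satisfied for every $i \in S$ because any feasible allocation assigns houses from $H$.

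With this choice, I would verify the three conditions of Definition~\ref{def: block coalition} directly against the properties of $x'(\theta)$: the containment condition $x'_i(\theta) \in H_S = H$ holds since $x'$ is a feasible allocation; the weak-improvement condition $x'_i(\theta) \succeq_i x_i(\theta)$ for all $i \in S = N$ is exactly the first clause of Pareto domination; and the strict-improvement condition $x'_j(\theta) \succ_j x_j(\theta)$ for some $j \in S$ is exactly the second clause. Hence $S = N$ is a blocking coalition for $x(\theta)$, so $x(\theta)$ is not in the core, completing the contrapositive.

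The argument is almost entirely definitional, so I do not anticipate a serious mathematical obstacle. The one point requiring care—and the place I would be most careful—is the feasibility of $x'(\theta)$ as an allocation over the full house set and the exact reading of $H_S$ when $S = N$. I would make sure that $X$ is defined so that any $x' \in X$ redistributes precisely the houses in $H$ among the agents (a permutation-type constraint), so that taking $H_S = H$ genuinely contains each $x'_i(\theta)$. Once that is confirmed, the three bullet conditions match term-for-term with the two Pareto-domination clauses, and the proof is immediate.
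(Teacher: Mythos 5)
Your proof is correct and follows essentially the same route as the paper's: assume $x(\theta)$ is not PE, take a Pareto-dominating allocation, and let the grand coalition $S=N$ block with it, contradicting the core. Your version is actually more careful than the paper's one-line argument (which loosely says the dominating allocation ``is PE'' rather than just dominating, and skips the verification of the blocking conditions), but the underlying idea is identical.
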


\begin{proof}
    Assume if $x(\theta)$ is not PE, then there exists other feasible allocation $y(\theta)$ which is PE, and a subset of $S$ including all agents blocks $x(\theta)$ with $y(\theta)$, which contradicts the definition of core.
\end{proof}

\section{Existing Mechanisms and Impossibility Results}
\label{sec: mechanisms and impossibility}

So far, we have defined the set of desirable properties that a matching mechanism should satisfy. In this section, we briefly review the existing matching mechanisms over networked housing markets. 

Moreover, we demonstrate that it is not possible for a matching mechanism over a networked housing market to simultaneously achieve IR, SP, and PE without restrictions on agents' preferences. Additionally, we also characterize the competition between inviters and invitees, leading to a mechanism that fails to satisfy SP.

Before introducing the matching mechanisms in detail, the following are some fundamental definitions and notations:

\begin{figure}[htp!]
    \centering
    \includegraphics[width=0.4\linewidth]{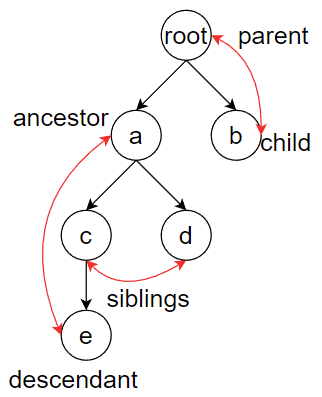}
    \caption{Basic notations}
    \label{fig: notation}
\end{figure}

\begin{itemize}
    \item A directed edge points from a parent node to a child node. (e.g., $a$ is the parent of $c$, and $c$ is the child of $a$ in Figure \ref{fig: notation}.)
    \item An ancestor (descendant) node of a node is either the parent (child) of the node or the parent (child) of some ancestor (descendant) of the node. (e.g., $a$ is the ancestor of $e$, and $e$ is the descendant of $a$.)
    \item Nodes with the same parent are called siblings. (e.g., $c$ is the sibling of $d$.)
\end{itemize}

\subsection{Top Trading Cycle}
\label{sec: TTC}

Top Trading cycle (TTC) is a well-known algorithm for a house allocation problem, which was first proposed in \cite{shapley1974cores}. 

\begin{definition}[Top Trading Cycle]
    TTC algorithm works as follows
    \begin{enumerate}
        \item each agent points to the most preferred house
        \item there must exist at least one cycle with a minimum length $1$
        \item for each cycle, assign each house to the agent pointing to it and remove the cycle from the market
        \item return to step $1$ until no agents remain in the market.
    \end{enumerate}
\end{definition}

Moreover, without social networks, it satisfies all the properties we mentioned in Section \ref{sec: properties} such as IR, SP, and PE \cite{ma1994strategy}. Nevertheless, it is neither SP nor PE under a networked housing market, which is explained in the first impossibility result. 

\subsection{Modified Top Trading Cycle}
\label{sec: modified TTC}

With the success of TTC under general cases, \cite{kawasaki2021mechanism} extended the algorithm into a networked housing market, which is called modified TTC. 

\begin{definition}[modified TTC]
    The modified TTC works as follows
    \begin{enumerate}
        \item each agent points to the most preferred house owned by his \textbf{parents}, \textbf{himself} or \textbf{descendants}
        \item there must exist at least one cycle with a minimum length $1$
        \item for each cycle, assign each house to the agent pointing to it and remove the cycle from the market
        \item return to step $1$ until no agents remain in the market.
    \end{enumerate}
\end{definition}

Despite the fact that modified TTC achieves IR and SP simultaneously, it restricts the preference of agents. Indeed, agents can only choose houses owned by their parents, descendants, and themselves. Furthermore, in Section \ref{sec: empirical}, we show that there may exist an allocation in which Pareto dominates the allocation of modified TTC.

\subsection{Leave and Share}
\label{sec: leave and share}

Later on, \cite{yang2022one} extended the work of modified TTC into a graph network and enlarged the preference domain of particular agents whose parents are removed from the market.

\begin{definition}[Leave and Share]
    The Leave and Share works as follows
    \begin{enumerate}
        \item find the minimum agent $i$ (distance from agent to the organizer)
        
        \item agent $i$ points to his most preferred house $h_{j}$ (owned by agent $j$ who is agent $i$'s \textbf{parents}, \textbf{children} or \textbf{himself}), then agent $j$ does the same action, iteratively, until a cycle with a minimum length $1$ is formed
        
        \item for each cycle, assign each house to the agent pointing to it and remove the cycle from the market
        
        \item reconnect the remaining agents and return to step 1 until no agents are left in the market.
    \end{enumerate}
\end{definition}

Although Leave and Share works well in a graph network, most agents can only exchange with their neighbors (parents and children). 

\subsection{YRMH-IGYT}
\label{sec: yrmh}

\cite{you2022strategy} studied the house allocation problem in which there exist some initially-vacant houses in the networked market. In other words, the number of houses is greater than the number of agents. Note that such a setting leads the problem less complicated than the traditional housing market. The mechanism is called You Request My House - I Get Your Turn (YRMH-IGYT). In order to keep consistency, we assume there are no vacant houses in the market.

\begin{definition}[YRMH-IGYT]
    YRMH-IGYT works as follows
    \begin{enumerate}
        \item find the minimum agent $i$ (distance from agent to the organizer)

        \item agent $i$ points to his most preferred house $h_{j}$ (owned by agent $j$ who is agent $i$'s \textbf{ancestors}, \textbf{children} or \textbf{himself}), then agent $j$ does the same action, iteratively, until a cycle with a minimum length $1$ is formed

        \item for each cycle, assign each house to the agent pointing to it and remove the cycle from the market
        
        \item reconnect the remaining agents and return to step 1 until no agents are left in the market.

    \end{enumerate}
\end{definition}

Similar to modified TTC, YRMH-IGYT restricts the preferences of agents, and there may exist an allocation that Pareto dominates the allocation of YRMH-IGYT.

\subsection{Impossibility Results}
\label{sec: impossibility}

We propose a novel matching mechanism for a networked housing market in response to the following impossibility results and aforementioned challenges, which are also discussed in \cite{kawasaki2021mechanism,yang2022one}.

\begin{theorem}
\label{thm: impossibility 1}
    For a networked housing market $(N, \Pi, r)$ with $n \geq 3$, no mechanism can achieve IR, SP, and PE simultaneously without restricting the preference domain.
\end{theorem}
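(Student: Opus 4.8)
The plan is to prove this impossibility result by constructing an explicit counterexample: a small networked housing market (the minimal case $n=3$ suffices) together with two preference profiles that differ only in one agent's report, on which no mechanism can simultaneously honor IR, SP, and PE. The strategy is the classic ``two instances linked by a profitable deviation'' argument. First I would fix a concrete tree structure — the simplest being a path $o \to 1 \to 2 \to 3$ or a parent with two children — and specify endowments $h_1, h_2, h_3$. Then I would write down a base preference profile $\theta$ and argue, using PE together with IR, that any admissible mechanism is forced into a specific allocation $x(\theta)$ on this instance (PE plus IR over a 3-agent market typically pins down the outcome up to the cycles that TTC would find).

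\textbf{The key steps, in order.} The crux is to exhibit an agent, say agent $1$, who can manipulate — either by misreporting preferences $\succ'_1$ or, more in the spirit of this networked setting, by withholding an invitation (reporting $r'_1 \subsetneq r_1$, thereby pruning a descendant from the market). I would design the two profiles so that: (i) under truthful reporting $\theta_1$, the PE+IR-forced allocation gives agent $1$ some house $h_a$; (ii) under the deviation $\theta'_1$, the market $G(\theta'_1, \theta_{-1})$ is altered (a competing invitee disappears, or preferences shift the forced cycle), and the PE+IR-forced allocation now gives agent $1$ a strictly better house $h_b$ with $h_b \succ_1 h_a$. Since both allocations are forced by IR and PE alone, any mechanism satisfying IR and PE must produce exactly these two outcomes, and then $x_1(\theta'_1, \theta_{-1}) \succ_1 x_1(\theta_1, \theta_{-1})$ directly violates the SP inequality $x_i(\theta_i,\theta'_{-i}) \succeq_i x_i(\theta'_i,\theta'_{-i})$. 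The emphasis on ``without restricting the preference domain'' signals that the counterexample must use a preference — presumably one where agent $1$ most prefers a descendant's house that an invitee would otherwise grab — that the modified-TTC domain would have forbidden.

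\textbf{The main obstacle} will be Step (i)/(ii): showing that IR and PE \emph{alone} force a unique allocation on each instance, with no freedom left for a clever mechanism to escape the trap. In a 3-agent market PE does not generally single out one allocation, so I expect to lean on the interaction of IR with PE and possibly on a case analysis over which houses each agent can feasibly receive given the tree (an invitee cannot receive a house unless the inviting ancestor participates). The delicate point is ensuring the deviation genuinely changes the \emph{set} of feasible or forced allocations — i.e., that removing the invitee or shifting the preference actually frees up the better house $h_b$ for agent $1$ rather than being neutralized. I would verify feasibility and the two forced outcomes by explicit enumeration of the (few) Pareto-efficient individually-rational allocations on each instance, then read off the SP violation. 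Because the claim is an impossibility, a single fully worked counterexample settles it, so no general argument across all markets is needed beyond confirming the construction embeds into any market with $n \ge 3$.
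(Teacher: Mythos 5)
Your proposal matches the paper's proof in both strategy and detail: the paper uses exactly the line network $o \to 1 \to 2 \to 3$ with a preference profile whose top trading cycle (agents $1$ and $3$) conflicts with agent $2$'s desire for $h_{1}$, enumerates all six allocations, discards those violating IR, notes PE leaves only two candidates, and kills each one — one via agent $2$ withholding the invitation of agent $3$, the other via the forced cycle between agents $1$ and $3$ — precisely the case analysis over the surviving IR$+$PE allocations that you anticipate as the main obstacle. Your plan is correct and essentially identical to the paper's argument, including the emphasis on invitation-withholding as the networked-market deviation.
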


Due to space constraints, proofs are given in Appendix.

Since it is impossible for a matching mechanism to be IR, SP, and PE simultaneously over the networked housing market. We introduce a weaker definition of the core by taking the network settings into account.

\begin{definition}[Core for Paths]
    For a networked market with $\theta$, there exists a path $p_{i}$ from the organizer $o$ to agent $i \in N$, denoting the set of all agents in $p_{i}$ as $P_{i}$. Given an allocation $x(\theta) \in X$, for any agent $i \in N$, we define an allocation $x(\theta)$ is in the core for paths if no subset of agents in $P_{i}$ can form a blocking coalition.  
\end{definition}

The definition of Core for Paths (CP) is similar to that of the Strict Core for Neighbors (SC4N) in \cite{kawasaki2021mechanism}. However, SC4N restricts the coalitions by two agents in a parent-child relationship, while CP focuses on coalitions formed by agents who are on the same path (agents who share a common ancestor, excluding the organizer).

\begin{example}[CP and SC4N]
\label{eg: matching/C4P SC4N}
    Consider four agents $N=\{s, 1, 2, 3 \}$, where $s$ is the market owner, they have a relationship $r_{s}=\{1 \}$, $r_{1}=\{2 \}$, $r_{2}=\{3 \}$, and $r_{3}=\emptyset$. Consider the following preferences:
    \begin{itemize}
        \item $h_{3} \succ_{1} h_{2} \succ_{1} h_{1}$,
        \item $h_{1} \succ_{2} h_{2} \succ_{2} h_{3}$,
        \item $h_{1} \succ_{3} h_{3} \succ_{3} h_{2}$.
    \end{itemize}
    We have the following two allocations:
    \begin{enumerate}
        \item (\textbf{SC4N}) $x_{1}=h_{1}$, $x_{2}=h_{3}$ and $x_{3}=h_{2}$,
        \item (\textbf{CP}) $x_{1}=h_{2}$, $x_{2}=h_{3}$, and $x_{3}=h_{1}$.
    \end{enumerate}
    
    The allocation ($1$) is SC4P, as agents $1$ and $2$ or agents $2$ and $3$ cannot form a blocking coalition to improve their allocations. However, such an allocation is not CP, as agents $1$, $2$, and $3$ can form a larger coalition group to have a better outcome (allocation ($2$)).
\end{example}

\begin{corollary}
 \label{coro: matching/CP SC4N}
    If an allocation is CP, it is also SC4N; however, if an allocation is SC4N, it may not be CP.
\end{corollary}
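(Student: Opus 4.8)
The plan is to establish the two halves of the statement separately: the inclusion ``CP implies SC4N'' by a containment argument on the families of coalitions the two notions quantify over, and the strictness ``SC4N does not imply CP'' by exhibiting the witness already worked out in Example~\ref{eg: matching/C4P SC4N}.

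For the forward implication, the one fact I need is that \emph{every parent--child pair is a two-element subset of some root-to-node path}. Indeed, if $j$ is the parent of $i$ (so that $i \in r_j$), then in the tree-structured market $j$ lies on the unique path $p_i$ from the organizer $o$ down to $i$; hence $\{i, j\} \subseteq P_i$. I would then argue by contraposition. Suppose an allocation $x(\theta)$ is not SC4N. Then some parent--child pair $\{j, i\}$ is a blocking coalition in the sense of Definition~\ref{def: block coalition}, witnessed by some $x'(\theta)$. Because the notion of a blocking coalition depends only on the agent set $S$ and not on any path structure, the same $x'(\theta)$ certifies that $\{i, j\}$---a subset of $P_i$---blocks $x(\theta)$. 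This contradicts CP, so every CP allocation is SC4N.

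For the converse, I would simply appeal to Example~\ref{eg: matching/C4P SC4N}. There the agents sit on a single path $s \to 1 \to 2 \to 3$, and allocation~(1) was shown to admit no blocking coalition among the parent--child pairs $\{1,2\}$ and $\{2,3\}$, so it is SC4N; yet the whole path $\{1,2,3\} \subseteq P_3$ reallocates to allocation~(2) with every member weakly better off and at least one strictly better off, so allocation~(1) is not CP. A single such instance suffices to show SC4N does not imply CP, which together with the forward implication gives the strict hierarchy.

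The only point demanding care is conceptual rather than computational: the two stability notions range over different families of coalitions---size-two parent--child sets for SC4N versus arbitrary subsets of a root-to-node path for CP---so one must verify that the smaller family (parent--child pairs) is contained in the larger one (path-subsets), and not the reverse. Once that containment is in hand the forward direction is a one-line contrapositive, and the failure of the reverse containment is exactly what the larger path-coalition in Example~\ref{eg: matching/C4P SC4N} exhibits.
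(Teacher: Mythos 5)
Your proposal is correct and takes essentially the same route as the paper: the paper offers no separate proof of this corollary, justifying it exactly by the coalition-family containment you formalize (every parent--child pair $\{i,j\}$ with $j$ the parent of $i$ is a subset of the path set $P_i$, so CP blocking coalitions subsume SC4N ones) together with Example~\ref{eg: matching/C4P SC4N} as the witness that SC4N does not imply CP. Your contribution is merely to make the contrapositive step and the containment explicit, which is a faithful elaboration rather than a different argument.
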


The following theorem highlights the key challenge for a matching mechanism over a networked housing market to guarantee agents invite all their children to join the barter market. In the following theorem, we use the term `compete' to refer to the situation where agents $i$ and $j$ both have (point) the same house as their top preference.

\begin{theorem}
\label{thm: impossibility 2}
    For a networked housing market $(N, \Pi, r)$ with $n \geq 3$, a matching mechanism is \textbf{not} SP if it allows agents $i$ and $j \in descendant(i)$ to compete for a house owned by
    \begin{itemize}
        \item an agent $k$ who is an ancestor of agent $i$ (e.g., $k \in ancestor(i)$),
        \item an agent $k$ who is a descendant of agent $j$ (e.g., $k \in descendant(j)$),
        \item an agent $k$ who is both a descendant of agent $i$ and an ancestor of agent $j$ (e.g., $k \in \{descendant(i) \cap ancestor(j)\}$), without restricting the preferences of their ancestors and descendants if they (agents $i$ and $j$) are \textbf{not} selected by an agent between them,
    \end{itemize} 
    where $ancestor(i)$ is the set of agents who are the ancestor of agent $i$, and $descendant(i)$ for the set of descendants of agent $i$.

    Furthermore, if any agent $i$ can select a house owned by agent $j$ with no relationship ($j \notin \{ancestor(i), descendant(i), sibling(i) \}$), such a mechanism is \textbf{not} SP.
    
\end{theorem}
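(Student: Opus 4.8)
The plan is to establish non-strategy-proofness by a sequence of explicit counterexamples, one for each competition pattern listed in the statement: since violating SP only requires a single profile in which truthful reporting is beaten by some deviation, it suffices to realize each allowed competition on a small tree (three or four agents, permitted by $n \geq 3$) and to compare the truthful outcome with the outcome after a deviation. Throughout I would exploit the pointing-and-cycle resolution underlying the notion of competition used here---two agents point to the same top house $h_k$, and the contested house is then awarded along some cycle---so that in each constructed profile exactly one contender secures $h_k$, and I would tune the owners' preferences so that the truthful report is the losing configuration for the intended deviator.

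First I would dispatch the cleanest pattern, $k \in ancestor(i)$. Place $k$, $i$, $j$ on a single path with $k$ above $i$ above $j$, let both $i$ and $j$ top-rank $h_k$, and let $k$ prefer $h_j$ to $h_i$. Reporting truthfully, $i$ invites $j$, the cycle $\{k,j\}$ forms, $j$ captures $h_k$, and $i$ is left with $h_i$. If instead $i$ hides the subtree containing $j$ (a feasible report, since $r_i' \subseteq r_i$), then $k$'s only partner is $i$, the cycle $\{i,k\}$ forms, and $i$ obtains $h_k \succ_i h_i$. Thus hiding $j$ strictly improves $i$, contradicting SP. The essential feature is that $h_k$, owned by an ancestor, stays reachable after $i$ prunes $j$.

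Cases (b), $k \in descendant(j)$, and (c), $k \in descendant(i)\cap ancestor(j)$, are the main obstacle, precisely because the contested house is no longer unconditionally reachable: in (b) the house $h_k$ lies below $j$, so $i$ cannot prune $j$ without also losing access to $h_k$, and in (c) it sits strictly between $i$ and $j$. For these I would enrich the tree with an auxiliary ancestor or sibling branch that supplies the deviator with an alternative trade, and design the preferences of the boundary owners so that the act of inviting (or of reporting a preference truthfully) reroutes a valuable house to the contender rather than to the deviator; the deviation---pruning the offending subtree in (b), or the preference misreport permitted by the unrestricted domain in (c)---then strictly improves the deviator. For (c) I would additionally invoke the stated side condition, that $k$ does not point back to $h_i$ or $h_j$, to ensure the intermediate agent cannot "rescue" the contest by absorbing $h_i$ or $h_j$ into its own cycle, so that the competition for $h_k$ genuinely resolves against the deviator. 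Checking that each enriched instance produces exactly the claimed cycles, and that the gain is strict rather than merely weak, is the delicate bookkeeping I expect to dominate the proof.

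Finally, for the unrelated-agent clause I would let $a$ be the lowest common ancestor of $i$ and $j$, so that $i$ and $j$ lie in two distinct subtrees hanging off $a$. Permitting $i$ to point to $h_j$ means a house can cross between these subtrees, and I would arrange preferences so that when the agent on the $j$-side branch truthfully invites down to $j$, the house $h_j$ is siphoned across to $i$'s subtree via a cross-cycle, leaving that branch agent (or $a$) strictly worse off than under the report that hides the exposed subtree. The resulting strict gain from hiding once more contradicts SP and completes the characterization. The conceptual point I would emphasize is that the tree and reachability constraints are exactly what make pruning a credible, and sometimes profitable, deviation, so any mechanism admitting the listed cross-relationship competitions inherits a manipulation.
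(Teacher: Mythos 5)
Your high-level strategy --- one small-tree counterexample per bullet, with a pruning deviation that eliminates the competition --- is exactly the paper's, and your first case $k \in ancestor(i)$ essentially coincides with the paper's construction (chain $k$--$i$--$j$, both competitors top-rank $h_k$, $i$ hides $j$; you only leave implicit the needed condition $h_i \succ_k h_k$, without which no cycle forms for $i$ after pruning). But the remaining three bullets are left as plans, and the plans have genuine gaps. For $k \in descendant(j)$ you declare the case a ``main obstacle'' requiring an enriched tree with an auxiliary branch; in fact the same three-agent chain suffices once you pick the right deviator: it is $j$ (the lower competitor), not $i$, who deviates, pruning his \emph{own} child $k$ and thereby deliberately destroying access to $h_k$. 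This is profitable because his fallback trade with the ancestor $i$ yields $h_i$, which he ranks above the house $h_j$ he is left with truthfully (in the paper: $h_3 \succ_2 h_1 \succ_2 h_2$, agent $2$ hides agent $3$ and then swaps with agent $1$). Your remark that ``$i$ cannot prune $j$ without also losing access to $h_k$'' shows you were analyzing the wrong deviator, and you never produce the instance.

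The gap is most serious for $k \in descendant(i) \cap ancestor(j)$, where you propose ``the preference misreport permitted by the unrestricted domain'' as the deviation. That lever cannot establish the claim: the losing competitor cannot secure $h_k$ by reordering his own list, since in the pointing-and-cycle semantics the contested house is carried away by a cycle that does not depend on his report. The paper's deviation is again a \emph{referral} misreport, and crucially by hiding a child who is not a competitor at all: on the chain $i$--$k$--$j$--$m$, the intermediate owner $k$ top-ranks $h_m$ (owned by $j$'s child $m$) and $m$ top-ranks $h_i$, so truthful reporting yields the long cycle $(i,k,m)$ and $i$ wins $h_k$; if $j$ hides $m$, then $k$'s best remaining house is $h_j$, the two-cycle $(j,k)$ forms, and $j$ wins $h_k$. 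This is exactly where the hypothesis that the mechanism leaves the preferences of ancestors/descendants (here $m$) unrestricted, and that $k$ does not select $h_i$ or $h_j$, enters; your reading of the side condition as merely ruling out a ``rescue'' by $k$ misses that the descendant's unrestricted pointing is the very thing the deviator manipulates. Finally, for the unrelated-owner clause the paper's deviator is the common ancestor itself (one of the two competitors), who hides the competing child so that the contested house in the sibling branch falls to him; your sketch places the deviator on the owner's branch and, as with the two middle cases, stops short of an explicit profile. As written, the proposal fully establishes only the first bullet.
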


\section{Top Trading Cycle with Diffusion}
\label{sec: our mechanism}

Given the impossibility results in Section \ref{sec: impossibility}, TTC fails to achieve IR, SP, and PE over a networked housing market. Moreover, we also explain how to restrict agents' preferences in order to keep the matching mechanism SP. Therefore, we propose a novel algorithm based on traditional TTC, which is called Top Trading Cycle with Diffusion (TTCD), in order to overcome the aforementioned challenges.

As highlighted by \cite{kawasaki2021mechanism}, the presence of multiple paths to an agent can result in strategic behavior and incompatibility. For example, agents may strategically accept invitations from others. To simplify our analysis, we focus on the social network, which is a directed tree rooted at organizer $o$. (For graph networks, see Appendix.) Furthermore, we allow the organizer to invite multiple agents, which is not well discussed in related literature.

\begin{definition}[Top Trading Cycle with Diffusion]
    TTCD works as follows
    \begin{enumerate}
        \item each agent $i \in r_{o}$ (agents invited by the organizer $o$) points to the most preferred house owned by his \textbf{siblings}, \textbf{himself} or \textbf{descendants}
    
        \item each agent $i \in N \setminus r_{o}$ points to the most preferred house owned by his \textbf{ancestors}, \textbf{himself} or \textbf{descendants}

        \item if agents $i$ and $j \in descendant(i)$ point to the same house owned by \textbf{agent $k \in ancestor(i)$}, update agent $j$ points to his next preferred house

        \item if agents $i$ and $j \in descendant(i)$ point to the same house owned by \textbf{agent $k \in descendant(j)$}, update agent $i$ points to his next preferred house

        \item if agents $i$ and $j \in descendant(i)$ point to the same house owned by \textbf{agent $k \in descendant(i)$ and $k \notin descendant(j)$}, then agent $k$ points to his most preferred house, and such house owner points to his most preferred house iteratively with the following rules until a cycle is formed
        \begin{itemize}
            \item if an agent points to a house owned by \textbf{agent $x \in \{i, ancestor(i)\}$}, \textbf{agent} $x$ points to the most preferred house owned by \textbf{agent} $i$ or $ancestor(i)$
            \item if an agent points to a house owned by \textbf{agent $x \in \{j, descendant(j)\}$}, \textbf{agent} $x$ points to the most preferred house owned by \textbf{agent} $j$ or $descendant(j)$
        \end{itemize}

        \item repeat steps $3$, $4$, and $5$ until there are no conflicts

        \item there must exist at least one cycle with a minimum length $1$

        \item for each cycle, assign each house to the agent pointing to it and remove the cycle from the market
        
        \item return to steps $1$ and $2$ until no agents remain in the market
    \end{enumerate}
\end{definition}

TTCD is easy to understand, and it works similarly to traditional TTC. For agent $i \in r_{o}$ invited by the organizer, they are free to select any house owned by their descendants and siblings. For other agents, if there are no conflicts between agents and their ancestors/descendants, they are free to select any house in the corresponding tree branch. (Recall our analysis focuses on a directed tree network rooted at the organizer $o$.)

Moreover, steps $3$, $4$, and $5$ prevent the conflicts in Theorem \ref{thm: impossibility 2} and guarantee all agents cannot be worse off from inviting others. Compared with the mechanisms in \cite{kawasaki2021mechanism,yang2022one}, the preference restriction in our mechanism is less strict. This makes our mechanism more efficient and flexible than other existing mechanisms.

We demonstrate a running process of TTCD by using the example shown in Figure \ref{fig: ttcd}. The preference list is given in Table \ref{tab: preference}.

\begin{figure}[htp!]
    \subfloat[]{%
        \includegraphics[width=.2\textwidth]{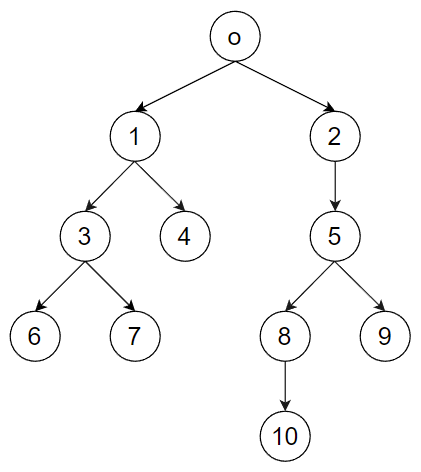}%
        \label{subfig: a}%
    }\hfill
    \subfloat[]{%
        \includegraphics[width=.2\textwidth]{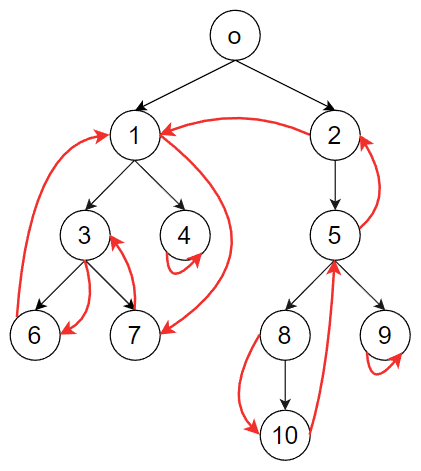}%
        \label{subfig: b}%
    }\\
    \subfloat[]{%
        \includegraphics[width=.2\textwidth]{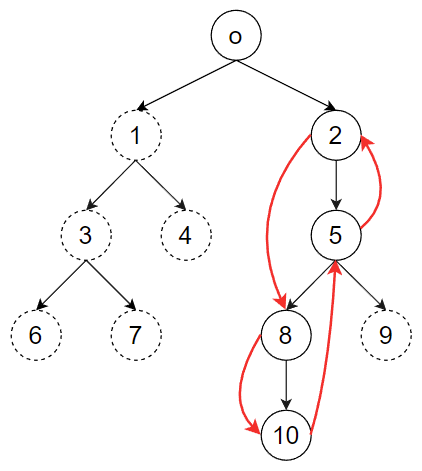}%
        \label{subfig: c}%
    }\hfill
    \caption{A running example of TTCD.}
    \label{fig: ttcd}
\end{figure}

\begin{table}[H]
\centering
\begin{tabular}{|c|c|}
\hline
$i$ & $\succ_{i}$                                                                                \\ \hline
1 & $h_{7} \succ h_{8} \succ h_{1} \succ ...$ \\ \hline
2 & $h_{1} \succ h_{8} \succ h_{7} \succ h_{6} \succ h_{9} \succ h_{5} \succ h_{2} \succ...$ \\ \hline
3 & $h_{6} \succ h_{10} \succ h_{3} \succ ...$ \\ \hline
4 & $h_{9} \succ h_{3} \succ h_{4} \succ...$ \\ \hline
5 & $h_{2}\succ h_{9} \succ h_{8} \succ h_{10} \succ h_{7} \succ h_{5} \succ ...$ \\ \hline
6 & $h_{1} \succ h_{4} \succ h_{3} \succ h_{2} \succ h_{7} \succ h_{10} \succ h_{6} \succ...$ \\ \hline
7 & $h_{4}\succ h_{3} \succ h_{2} \succ h_{10} \succ h_{1} \succ h_{7} \succ ...$ \\ \hline
8 & $h_{10}\succ h_{9} \succ h_{8} \succ ...$ \\ \hline
9 & $h_{3}\succ h_{4} \succ h_{2} \succ h_{7} \succ h_{9} \succ...$ \\ \hline
10 & $h_{5}\succ h_{6} \succ h_{4} \succ h_{8} \succ h_{9} \succ h_{10}\succ...$\\ \hline
\end{tabular}
\caption{Preference list of Figure \ref{fig: ttcd}.}
\label{tab: preference}
\end{table}

\begin{itemize}
    \item Figure \ref{subfig: a} is the directed tree rooted at organizer $o$.
    \item All agents point to their most preferred house available in the market, which is shown in Figure \ref{subfig: b}. Note that agent $9$ cannot point to $h_{2}$, as it is pointed by his ancestor agent $5$. (Recall steps $1$, $2$, $3$, and $4$.)

    \item After the first iteration, agents $1$, $3$, $4$, $6$, $7$, and $9$ are removed from the market.

    \item Agent $2$'s most preferred house in the market is now $h_{8}$.
    
    \item The allocation of agents $N=\{1, 2, 3, 4, 5, 6, 7, 8, 9, 10\}$ is houses $X=\{h_{7}, h_{8}, h_{6}, h_{4}, h_{2}, h_{1}, h_{3}, h_{10}, h_{9}, h_{5}\}$.
\end{itemize}

\subsection{Properties of TTCD}
\label{sec: properties of TTCD}

In this section, we show that TTCD satisfies all the desirable properties we mentioned in Section \ref{sec: properties}.

\begin{lemma}
\label{lemma: ir}
    Top Trading Cycle with Diffusion mechanism satisfies individually rational.
\end{lemma}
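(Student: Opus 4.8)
The plan is to adapt the classical individual-rationality argument for TTC, the crux being that every agent can always fall back on his own house throughout the execution of TTCD. First I would establish the invariant that, as long as agent $i$ has not yet been assigned, his endowment $h_i$ is still present in the market. This follows from the cycle-removal rule (steps $7$--$8$): the houses assigned along a cycle are exactly those owned by the agents forming that cycle, so a house leaves the market together with its owner and never before. Hence $h_i$ is available precisely while $i$ remains, and $i$ can always include $h_i$ among his admissible pointers (the ``himself'' option in steps $1$--$2$).

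Second, I would prove that at every iteration the house agent $i$ effectively points to, after the conflict-resolution steps $3$--$5$ have been applied, is always weakly preferred by $i$ to $h_i$. The base case is immediate, since in steps $1$--$2$ each agent points to his most preferred \emph{available} house within an admissible set containing $h_i$, so the pointer is $\succeq_i h_i$ before any conflict is resolved. The content of the proof is to check that steps $3$--$5$ never demote an agent below his own endowment.

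This monotonicity-above-the-endowment property of the conflict resolution is the main obstacle, and the step I would treat most carefully by inspecting each rule. In step $3$ the demoted agent $j$ is a strict descendant of $i$ while the contested house is owned by some $k \in ancestor(i)$; since $h_j$ is owned by $j \in descendant(i)$, the skipped house is never $h_j$, so $j$'s pointer only advances past houses owned by ancestors of $i$ and cannot fall below $h_j$. Symmetrically, in step $4$ the demoted agent is the ancestor $i$ and the skipped house is owned by some $k \in descendant(j) \subseteq descendant(i)$, which is never $h_i$. In step $5$ no agent is demoted at all: the rule merely redirects agents to their most preferred house owned by $\{i, ancestor(i)\}$ or by $\{j, descendant(j)\}$, and each such set contains the pointing agent's own house, so every pointer produced there remains weakly preferred to that agent's endowment. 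I would also confirm the boundary case where $i$ and $j$ would clash on $h_j$ itself: since $j \in descendant(i) \setminus descendant(j)$, this falls under step $5$ rather than a demotion rule, so pointing to one's own house is never overridden.

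Combining the two invariants, when agent $i$ is finally assigned he lies on some cycle and receives exactly the house he points to, which by the second invariant is $\succeq_i h_i$. As this holds for every $i \in N$ under truthful reporting, TTCD is individually rational.
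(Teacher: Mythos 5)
Your proof is correct and takes essentially the same approach as the paper, whose entire argument is the observation that under TTCD an agent never points to a house worse than his own endowment $h_i$ and hence is never assigned one. The paper leaves this as a one-line remark; your two invariants (that $h_i$ stays in the market while $i$ does, and that the conflict-resolution steps $3$--$5$ never demote a pointer below the agent's own endowment) supply exactly the verification that remark takes for granted.
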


IR is obvious, as agent $i$ never points to a house that is worse than $h_{i}$ for him, he is never allocated a house worse than $h_{i}$ under TTCD. Therefore, it is always beneficial for agents to join the system.

\begin{lemma}
\label{lemma: sp}
    Top Trading Cycle with Diffusion mechanism is strategy-proof.
\end{lemma}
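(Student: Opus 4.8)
The plan is to prove strategy-proofness by decomposing an arbitrary deviation $\theta_i' = (\succ_i', r_i')$ (with $r_i' \subseteq r_i$) into a preference deviation and an invitation deviation, and to handle each separately. Concretely, I would establish the two inequalities
\[
x_i(\succ_i, r_i, \theta_{-i}') \succeq_i x_i(\succ_i, r_i', \theta_{-i}') \succeq_i x_i(\succ_i', r_i', \theta_{-i}'),
\]
so that transitivity of $\succ_i$ yields $x_i(\theta_i, \theta_{-i}') \succeq_i x_i(\theta_i', \theta_{-i}')$, which is exactly SP. The right inequality is the \emph{preference} part (truthful reporting is optimal for a fixed invitation, hence a fixed tree), and the left inequality is the \emph{invitation} part (inviting every child is optimal given truthful preferences).

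For the preference part I would fix the realized tree, which is determined once $r_i'$ and $\theta_{-i}'$ are fixed, and argue that on this tree TTCD reduces to a constrained Top Trading Cycle. The feasibility restriction (an agent may only point to houses owned by its ancestors/siblings/self/descendants) and the conflict-resolution rules of steps 3--5 depend on the reported preferences only through the \emph{order} in which houses are pointed to, never through cardinal information, and they reroute pointers according to fixed tree positions. Hence in every round each surviving agent is effectively matched to its most preferred house among a structurally determined ``attainable'' set, and I would invoke the standard TTC argument: at the round in which agent $i$ is removed it receives its top choice among the still-present feasible houses, and reporting any $\succ_i'$ can only absorb $i$ into an earlier cycle carrying a weakly worse house, or leave its outcome unchanged. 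This is the routine adaptation of the classical result that TTC is strategy-proof, so I would keep it brief.

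The invitation part is the crux. Reducing by induction on $r_i \setminus r_i'$ to the omission of a single child $c$ (and hence its whole subtree $T_c$), I would compare the two executions of TTCD that differ only in whether $T_c$ is present. The goal is to show that $i$'s attainable set in the run \emph{with} $T_c$ contains every house attainable to $i$ in the run \emph{without} $T_c$, plus possibly some houses inside $T_c$; since $i$ always points to its most preferred attainable house, its outcome can only improve. The key structural claim, which is precisely what steps 3--5 are engineered to guarantee, is that $i$ never loses to one of its own descendants any house it could have obtained without the invitation: step 3 gives $i$ priority over descendants for houses owned by $ancestor(i)$; step 5 separates the cycle formation so that $i$ together with $ancestor(i)$ trade among themselves while $j$ together with $descendant(j)$ trade among themselves, preserving $i$'s access to the houses outside $T_c$ that it cares about; and step 4 lets a descendant $j$ win only for houses owned inside $descendant(j) \subseteq T_c$, which are options that simply do not exist when $i$ withholds the invitation, so conceding them costs $i$ nothing relative to the baseline.

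The main obstacle I anticipate is making this comparison rigorous across the entire sequence of rounds, because deleting $T_c$ can alter which cycles form among the agents outside $T_c$. I would handle this with a coupling argument by induction on rounds, showing that any trading cycle available in the no-$T_c$ execution corresponds to a trading opportunity still realizable for $i$ (possibly after the $T_c$-internal trades have resolved) in the with-$T_c$ execution, and using the rerouting of step 5 to certify that no descendant of $i$ can intercept a house $i$ would otherwise reach through its ancestors. Verifying that this invariant is maintained across rounds---that the conflict rules never create a fresh blockage of $i$ by a descendant lying outside $T_c$---is the delicate bookkeeping on which the whole argument turns. Once it is in place, individual rationality (Lemma \ref{lemma: ir}) ensures the added $T_c$-houses are accepted by $i$ only when they are improvements, completing the weak dominance of full invitation and hence the proof.
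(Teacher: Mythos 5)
Your proposal follows essentially the same route as the paper's own proof: the identical decomposition, via transitivity, into a preference part (truthful $\succ_{i}$ is optimal for any fixed $r_{i}^{\prime}$, by the standard TTC-style argument on the realized tree) and an invitation part (reporting all of $r_{i}$ is optimal given truthful preferences, because steps 3--5 prevent descendants from intercepting houses $i$ could otherwise reach). If anything, your outline is more explicit than the paper's, whose case analysis simply asserts the cross-round claim (``misreporting $r_{i}^{\prime}$ may influence the availability of houses'') that you correctly flag as the delicate coupling invariant needing verification.
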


\begin{proof}
    Note that the agent $i$'s report type $\theta_{i}^{\prime}$ consists of two information, his preference $\succ_{i}^{\prime}$ and his children set $r_{i}^{\prime}$. 

     (\textbf{true preference $\succ_{i}^{\prime}=\succ_{i}$}) For a fixed $r_{i}^{\prime}$ , we show that agent $i$ cannot obtain a better allocation by reporting $(\succ_{i}^{\prime}, r_{i}^{\prime})$ such that $x((\succ_{i}, r_{i}^{\prime}), \theta_{-i}^{\prime}) \succeq_{i} x((\succ_{i}^{\prime}, r_{i}^{\prime}), \theta_{-i}^{\prime})$.

     \textbf{Case 1}: If agent $i$'s favorite house is the house owned by himself $h_{i}$, he can keep $h_{i}$ immediately by reporting $\succ_{i}$. Moreover, misreporting $\succ_{i}^{\prime}$ leads him to point to a less preferred house and probably form a cycle with other agents. As a result, he is allocated a less preferred house. Hence, it is never optimal for agents to misreport in this case. (This also supplements the proof of IR.)

     \textbf{Case 2}: Assuming agent $i$'s favorite house is $h_{j}$ owned by agent $j$. 

     If there are no conflicts (no agents point to $h_{j}$), under TTCD, agent $i$ always points to $h_{j}$. Hence, the formation of the trading cycle, including agents $i$ and $j$, depends on agent $j$ and other agents, which is irrelevant to $\succ_{i}^{\prime}$. If agent $i$ misreports $\succ_{i}^{\prime}$, it may form a trading cycle with a less preferred house.

    If there exists a conflict (other agents point to $h_{j}$), based on the rule of TTCD, agent $i$'s preference may be restricted, which depends on the conflict type. If agent $i$ is not allowed to point $h_{j}$, which means there exists an agent with a higher priority on selecting $h_{j}$. Thus, whatever agent $i$ misreports $\succ_{i}^{\prime}$, he is never allocated $h_{j}$. 

    If agent $i$ is allowed point $h_{j}$, the formation of the trading cycle depends on agent $j$, which is irrelevant to $\succ_{i}^{\prime}$. The problem goes back to the no conflicts case.

    (\textbf{all children $r_{i}^{\prime}=r_{i}$}). So far, we have proved all agents benefit from reporting true preference. In this part, we need to show that $x((\succ_{i}, r_{i}), \theta_{-i}^{\prime}) \succeq_{i} x((\succ_{i}, r_{i}^{\prime}), \theta_{-i}^{\prime})$, where $r_{i}^{\prime} \subseteq r_{i}$. We only need to consider the situation in that agent $i$ competes with his descendants.

    \textbf{Case 1}: If agent $i$'s favorite house is the house owned by himself $h_{i}$, the allocation of $h_{i}$ is irrelevant to $r_{i}^{\prime}$, and he keeps $h_{i}$ under TTCD. 

    \textbf{Case 2}: Assuming agent $i$'s favorite house is $h_{j}$ owned by agent $j$. 
    
    If there are no conflicts, under TTCD, agent $i$ always points to $h_{j}$. Hence, the formation of the trading cycle, including agents $i$ and $j$, depends on agent $j$ and other agents. If agent $i$ misreports $r_{i}^{\prime}$, it may influence the availability of houses for other agents and fail to form a trading cycle including agents $i$ and $j$.

    If there exists a conflict and agent $i$ is not allowed to point $h_{j}$, whatever he misreports $r_{i}^{\prime}$, he can never form a trading cycle including agent $j$. Moreover, misreporting $r_{i}^{\prime}$ may influence the availability of the next favorite house for agent $i$.
    
    If agent $i$ is allowed point $h_{j}$, the formation of the trading cycle depends on agent $j$. The problem goes back to the no conflicts case.

    Thus, reporting $\theta_{i}^{\prime}=(\succ_{i}, r_{i})$ is optimal under TTCD.

\end{proof}

Lemma \ref{lemma: sp} reveals that reporting private information truthfully is the dominant strategy under TTCD. Indeed, misreporting either the preference $\succ_{i}$ or information of children $r_{i}$ leads to a worse allocation.

According to Theorem \ref{thm: impossibility 1}, as our mechanism is IR and SP, it is \textbf{not} Pareto Efficient. Recall the allocation $X$ in Figure \ref{fig: ttcd}, agents $4$ and $9$ can swap their houses between them to have a better result without affecting other agents' allocations.

\begin{corollary}
\label{coro: not efficient}
    Top Trading Cycle with Diffusion is not Pareto efficient.
\end{corollary}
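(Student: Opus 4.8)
The plan is to prove the negative statement by exhibiting a single concrete instance in which the TTCD outcome is Pareto dominated by another feasible allocation; by the definition of PE this is enough. The natural candidate is the running example already computed in Figure~\ref{fig: ttcd} with preferences in Table~\ref{tab: preference}, whose TTCD allocation assigns $h_4$ to agent $4$ and $h_9$ to agent $9$. So I would reuse that worked example rather than construct a fresh one.

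First I would read off the two relevant rows of Table~\ref{tab: preference}: agent $4$ ranks $h_9 \succ_4 h_3 \succ_4 h_4$, so in particular $h_9 \succ_4 h_4$; and agent $9$ ranks $h_3 \succ_9 h_4 \succ_9 h_2 \succ_9 h_7 \succ_9 h_9$, so $h_4 \succ_9 h_9$. Hence each of the two agents strictly prefers the house the other received under TTCD. Next I would define the alternative allocation $y$ that coincides with the TTCD allocation $x(\theta)$ everywhere except that agents $4$ and $9$ exchange houses, i.e. $y_4 = h_9$, $y_9 = h_4$, and $y_i = x_i(\theta)$ for all $i \notin \{4,9\}$. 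Since $y$ is obtained from the bijection $x(\theta)$ by a single transposition of two houses, it is again a valid redistribution and therefore a feasible allocation in $X$.

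It then remains to verify the three conditions of Pareto domination against $x(\theta)$: we have $y_i = x_i(\theta)$ (hence $y_i \succeq_i x_i(\theta)$) for every $i \notin \{4,9\}$, and $y_4 = h_9 \succ_4 h_4 = x_4(\theta)$ together with $y_9 = h_4 \succ_9 h_9 = x_9(\theta)$ by the comparisons above. Thus $y_i \succeq_i x_i(\theta)$ for all $i \in N$ and $y_j \succ_j x_j(\theta)$ for $j=4$ (and also $j=9$), so $y$ Pareto dominates $x(\theta)$. Consequently the TTCD allocation is not PE, which establishes the corollary.

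I expect no genuine obstacle here: the only points needing a word of care are that the swap affects no agent other than $4$ and $9$ (immediate, since exactly the two houses $h_4,h_9$ are reassigned and all other rows of the allocation are untouched) and that $y$ stays feasible (a transposition of a bijection is a bijection). As a consistency check aligned with Theorem~\ref{thm: impossibility 1}, because TTCD is IR (Lemma~\ref{lemma: ir}) and SP (Lemma~\ref{lemma: sp}), it cannot simultaneously be PE; the explicit $4$--$9$ swap simply makes this abstract conclusion concrete and self-contained.
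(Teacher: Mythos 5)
Your proposal is correct and follows essentially the same route as the paper: the paper also justifies Corollary~\ref{coro: not efficient} by noting that TTCD is IR and SP (so Theorem~\ref{thm: impossibility 1} rules out PE) and by pointing to the running example of Figure~\ref{fig: ttcd}, where agents $4$ and $9$ can swap $h_4$ and $h_9$ to Pareto dominate the TTCD outcome. Your write-up merely makes the paper's one-line observation fully explicit (reading off $h_9 \succ_4 h_4$ and $h_4 \succ_9 h_9$ from Table~\ref{tab: preference} and verifying feasibility of the transposition), which is a faithful, correct elaboration of the same argument.
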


Although TTCD is not Pareto efficient considering the entire social network, we show that agents cannot collude with their ancestors or descendants to improve their allocations.

\begin{lemma}
\label{lemma: core}
    Over a directed tree networked market, the outcome of the Top Trading Cycle with Diffusion mechanism is in the core for paths.
\end{lemma}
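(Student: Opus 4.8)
The plan is to argue by contradiction using the classical ``cycle-peeling'' core argument for Top Trading Cycle, specialized to coalitions that live on a single root-to-agent path $P_{i}$. First I would fix such a path and recall that, because $P_{i}$ is a chain, every two of its agents stand in an ancestor--descendant relation; hence any candidate coalition $S \subseteq P_{i}$ is totally ordered by ancestry, and the houses $H_{S}$ it may reshuffle are exactly the endowments of its members. Suppose, for contradiction, that $S$ blocks the TTCD outcome $x$ via a reallocation $x^{\prime}$ in the sense of Definition~\ref{def: block coalition}, so $x_{s}^{\prime} \in H_{S}$ and $x_{s}^{\prime} \succeq_{s} x_{s}$ for all $s \in S$, with $x_{j}^{\prime} \succ_{j} x_{j}$ for some $j \in S$.

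I would then organize the proof around the order in which TTCD removes cycles (steps $7$--$9$), indexing rounds $t = 1, 2, \dots$. The engine is the \emph{subclaim} that the agent $a \in S$ removed in the earliest round receives $x_{a}$ with $x_{a} \succeq_{a} h$ for every $h \in H_{S}$; combined with the blocking inequality $x_{a}^{\prime} \succeq_{a} x_{a}$ and strictness of preferences, this forces $x_{a} = x_{a}^{\prime}$. I would then follow $a$'s trading cycle: since $x_{a} = x_{a}^{\prime} \in H_{S}$ is owned by the agent $a$ points to, that owner also lies in $S$, is removed in the same earliest round, and satisfies the same subclaim; traversing the cycle shows every agent in it lies in $S$ and has $x = x^{\prime}$. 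Removing these cycles simultaneously from the market and from $S$ leaves a smaller blocking coalition on the residual chain market --- the departed agents took precisely houses of their own cycle, so $x^{\prime}$ restricted to the remaining agents is still a reallocation of their endowments --- and I would recurse. Since the strictly-improved agent $j$ is removed in some round, the recursion eventually yields $x_{j} = x_{j}^{\prime}$, contradicting $x_{j}^{\prime} \succ_{j} x_{j}$ and establishing membership in the core for paths.

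The main obstacle is the subclaim itself, because the conflict-resolution rules (steps $3$--$5$) can force an agent to skip a house it ranks above its final assignment, so the earliest-removed member need not automatically hold its coalition-best house. A first simplification is that the pointing restrictions of steps $1$--$2$ never exclude a coalition house: on the chain $P_{i}$ every member of $S$ is an ancestor or descendant of $a$ (or $a$ itself), so every house of $H_{S}$ is permitted to $a$, and only the conflict rules can deny it. I would then do a case analysis of \emph{why} $a$ skipped some $h_{k} \in H_{S}$ (so $k \in S$, and since $a$ is earliest-removed in $S$, $k$ survives at least until $a$'s round and $h_{k}$ is still present): in the step-$3$ case $h_{k}$ belongs to a common ancestor and priority is ceded to an agent strictly between $k$ and $a$ on the chain; in the step-$4$ case it belongs to a common descendant and priority goes to the lower agent; in the step-$5$ case the forced mini-trading-cycle over the contested segment reassigns $h_{k}$. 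In each case the coveted house is absorbed by a higher-priority agent of the same path, or its owner is placed into a cycle that gives it something it strictly prefers, so that owner will not release $h_{k}$ in any weakly-improving reallocation of $H_{S}$. Formalizing this statement --- effectively that the three rules of Theorem~\ref{thm: impossibility 2} induce a single consistent, acyclic priority on each chain under which the TTCD path-outcome is exactly a top-trading (hence core) allocation of the chain submarket --- is the crux, and I would prove it by tracking, for every contested house, which path-agent the rules award it to and verifying inductively (mirroring the skip-free argument already used for Lemma~\ref{lemma: sp}) that the awarded agent's assignment is coalition-optimal for it.
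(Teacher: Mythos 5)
Your plan fails at exactly the point you flagged as the crux, and the failure is fatal: the subclaim that the earliest-removed member of $S$ receives its best house in $H_{S}$ is false for TTCD, and the case analysis you sketch to rescue it does not go through. Run TTCD on the chain of Example~\ref{eg: matching/C4P SC4N} ($o \to 1 \to 2 \to 3$, preferences $h_{3} \succ_{1} h_{2} \succ_{1} h_{1}$, $h_{1} \succ_{2} h_{2} \succ_{2} h_{3}$, $h_{1} \succ_{3} h_{3} \succ_{3} h_{2}$). Agents $2$ and $3$ both point to $h_{1}$, so step $3$ --- the very rule that buys strategy-proofness against the manipulation in Theorem~\ref{thm: impossibility 2} --- forces agent $3$ to skip to his own house, which he then keeps as a length-one cycle; agent $1$ is thereby denied $h_{3}$, skips to $h_{2}$, and TTCD outputs $x = (h_{2}, h_{1}, h_{3})$. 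Now take $S = \{1,3\} \subseteq P_{3}$: both members leave in the first round, yet agent $1$ holds $h_{2}$ while $h_{3} \in H_{S}$ and $h_{3} \succ_{1} h_{2}$, so the subclaim fails (it fails for the connected coalition $\{1,2,3\}$ as well, via the same agent $1$). Worse, swapping endowments gives agent $1$ the house $h_{3}$ and agent $3$ the house $h_{1}$, both strict improvements, so $\{1,3\}$ \emph{blocks} in the sense of Definition~\ref{def: block coalition}. Your rescue clause --- that the owner of the skipped house ``will not release $h_{k}$ in any weakly-improving reallocation of $H_{S}$'' --- is a non sequitur: under TTCD agent $1$'s endowment $h_{1}$ went to agent $2$, who is \emph{outside} the coalition, and agent $1$ gladly hands $h_{1}$ to agent $3$ in exchange for $h_{3}$, which beats his TTCD assignment. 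The conflict rules of steps $3$--$5$ deliberately sacrifice the ``whoever leaves first got the coalition-best house'' invariant to gain SP, so the Roth--Postlewaite cycle-peeling engine you are importing simply does not survive the transfer from TTC to TTCD.

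The deeper issue is a mismatch over what coalitions are allowed, and this is where your route genuinely diverges from the paper's. Your argument quantifies over \emph{all} subsets of $P_{i}$ --- the literal definition of core for paths --- and under that reading the lemma itself is false, as the example above shows. The paper's proof instead (implicitly) restricts to \emph{connected} coalitions: it asserts that ``agents in the coalition group $S$ are fully connected to each other,'' argues the adjacent-pair case $|S|=2$ directly (a blocking pair must swap, the blocking inequalities pin down both preferences, and TTCD would have executed that very swap, since no conflict rule separates a parent and child pointing at each other's houses), and merges adjacent members to reduce $|S|>2$ to that case; the remark following Lemma~\ref{lemma: core} explicitly concedes that non-adjacent agents on a path may form an improving group but are deemed unable to collude. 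So the proposal cannot be completed as written: either you import the connectivity restriction, at which point the adjacent-pair analysis already suffices and the round-indexed peeling machinery is unnecessary, or you keep the literal definition, in which case no proof exists because the statement fails on the paper's own example.
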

    
Lemma \ref{lemma: core} states that the allocation of TTCD is stable in each path of the tree network such that agents cannot improve their allocations by forming a small coalition group with their ancestors and descendants. Even though there may exist a coalition group in the allocation of TTCD, the agents in the group are not directly connected and thus cannot collude with each other under the network's settings.

As mentioned in \cite{kempe2003maximizing,kawasaki2021mechanism}, the number of swaps is a significant measure for evaluating a matching mechanism for a third-party organizer. Particularly when agents pay the organizer for a swap to a new house. Besides satisfying some desirable properties mentioned in Section \ref{sec: properties}, the organizer aims to maximize the number of swaps as much as possible.

\begin{lemma}
\label{lemma: swap}
    The number of swaps under TTCD is higher than that under the modified TTC \cite{kawasaki2021mechanism}.
\end{lemma}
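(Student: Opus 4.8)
The plan is to prove the inequality in two parts: a weak domination $\#\text{swaps}(\text{TTCD}) \ge \#\text{swaps}(\text{modified TTC})$ that holds on every instance, together with one explicit instance on which it is strict. Counting a \emph{swap} as an agent assigned a house other than his own endowment, it suffices to show that the set of \emph{keepers} (agents assigned $h_i$) under TTCD is contained in the set of keepers under modified TTC; equivalently, every agent who trades under modified TTC also trades under TTCD. The first step is to record the containment of pointing domains: for a non-root agent modified TTC admits $\{\text{parent},\text{self},\text{descendants}\}$ while TTCD admits $\{ancestor(i),\text{self},descendant(i)\} \supseteq \{\text{parent},\text{self},\text{descendants}\}$, and for a root child modified TTC admits only $\{\text{self},\text{descendants}\}$ (the organizer owns no house) while TTCD additionally admits the siblings. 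In both cases the true preference $\succ_i$ is unchanged and the own house $h_i$ stays admissible; TTCD only \emph{enlarges} the set of houses an agent may point to.

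The core step is a monotonicity argument for the underlying cycle dynamics. Ignoring the conflict rules for a moment, TTCD is ordinary TTC run on these enlarged admissible sets. I would show by induction on the rounds, via a coupling of the two executions, that enlarging the admissible set (without reordering $\succ_i$ and keeping $h_i$ admissible) can only weakly increase the number of traded agents: whenever a newly admissible house diverts an agent's pointer away from the choice it used under modified TTC, either this closes a \emph{new} cycle, strictly increasing trades, or the diverted agent loses the contested house to a higher-priority claimant, is removed one round later, and falls back to exactly the pointer it used under modified TTC, so the old cycle still closes. Self-loops resolve immediately and only free contested houses for lower-priority agents, so no modified-TTC cycle is ever destroyed.

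What remains, and what I expect to be the main obstacle, is to verify that TTCD's conflict-resolution rules (steps $3$–$5$), which deliberately override the natural top choice to secure strategy-proofness, never convert a trader into a keeper relative to modified TTC. For steps $3$ and $4$ this is comparatively direct: the agent forced to re-point is exactly the lower-priority party in an inviter/invitee competition, who by the logic of Theorem \ref{thm: impossibility 2} could not have secured the contested house in any case, so it simply advances to its next admissible house while the contested house is still consumed inside a trade. The delicate case is step $5$, whose iterative re-routing splits the participants into an upper block ($i$ together with $ancestor(i)$) and a lower block ($j$ together with $descendant(j)$) and forces each block to point internally; I would argue that these rules are precisely the ones that keep both blocks internally matched so that the resulting cycle still closes, and would discharge this by a finite case analysis over the position of the contested owner $k$ and the interleaving of the two pointing rules with the removal order. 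Finally, for strictness it suffices to exhibit two root children who most prefer each other's houses: modified TTC leaves both as keepers (a sibling is not admissible) while TTCD trades them, so $\#\text{swaps}$ rises from $0$ to $2$; the running example of Figure \ref{fig: ttcd} gives a second, larger witness.
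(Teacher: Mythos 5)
Your proposal has a genuine gap, and it is fatal: the ``core step'' monotonicity claim --- that enlarging the admissible pointing sets (keeping $\succ_i$ and keeping $h_i$ admissible) can only weakly increase the number of traded agents --- is false, and with it the per-instance weak domination $\#\text{swaps}(\text{TTCD}) \ge \#\text{swaps}(\text{modified TTC})$ that your whole first part rests on. Concretely, take the tree $r_{o}=\{1\}$, $r_{1}=\{2,4\}$, $r_{2}=\{3\}$, $r_{3}=r_{4}=\emptyset$, with preferences $h_{3} \succ_{1} h_{2} \succ_{1} h_{4} \succ_{1} h_{1}$, $h_{3} \succ_{2} h_{1} \succ_{2} h_{2}$, $h_{1} \succ_{3} h_{2} \succ_{3} h_{3}$, and $h_{1} \succ_{4} h_{4}$. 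Under modified TTC, agent $3$ cannot point to $h_{1}$ (agent $1$ is his grandparent, not his parent), so round~1 gives the cycle $(2,3)$ via $2 \to h_{3}$, $3 \to h_{2}$, and round~2 gives the cycle $(1,4)$ via $1 \to h_{4}$, $4 \to h_{1}$: all four agents trade, $4$ swaps. Under TTCD, agent $3$ now points to $h_{1}$; agents $1$ and $2$ both point to $h_{3}$, which is owned by agent $3 \in descendant(2)$, so step~$4$ diverts agent $1$ to his next choice $h_{2}$; this closes the cycle $(1,2,3)$, and agent $4$, whose only remaining admissible house after that cycle is removed is his own, keeps $h_{4}$: only $3$ swaps. (Even stripping out the conflict rules, plain TTC on the enlarged domains immediately closes the cycle $(1,3)$ and strands \emph{both} agents $2$ and $4$: $2$ swaps.) This is precisely the failure mode your coupling argument assumes away: a newly admissible house can divert an agent into a \emph{shorter} new cycle that consumes houses needed by two different old cycles, so ``the old cycle still closes'' does not hold, and keepers$(\text{TTCD}) \subseteq$ keepers$(\text{modified TTC})$ is not a theorem.

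For comparison, the paper does not attempt (and could not attempt) per-instance domination: its proof is an informal domain-containment argument --- TTCD's pointing sets strictly contain modified TTC's (all ancestors rather than only parents, plus siblings for children of the organizer), hence agents who would keep their endowment under modified TTC \emph{may} trade under TTCD --- with the quantitative claim supported empirically by the simulations in Section~6 rather than by a worst-case proof. Your strictness witness (two children of the organizer who top-rank each other's houses) is correct and is essentially the same kind of example the paper's intuition points to, but it only shows the inequality \emph{can} be strict; since the weak inequality fails on some instances, no completion of your first part is possible, and the lemma can only be read as an average-case or typical-case statement.
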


\section{Empirical Evaluations}
\label{sec: empirical}

In this section, we start with a random example to show the advantages of TTCD compared with other existing mechanisms (modified TTC, Leave and Share, and YRMH-IGYT). Then, we numerically compare these mechanisms by simulations. 

\begin{figure}[H]
    \centering
    \includegraphics[width=0.45\textwidth]{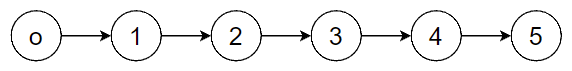}
    \caption{\textbf{Preference}: $h_{5} \succ_{1} h_{4} \succ_{1} h_{2} \succ_{1} h_{1}$, $h_{3} \succ_{2} h_{2}$, $h_{2} \succ_{3} h_{3}$, $h_{1} \succ_{4} h_{5} \succ_{4} h_{4}$ and $h_{1} \succ_{5} h_{4} \succ_{5} h_{5}$.}
    \label{fig: compare}
\end{figure}

Considering the social network in Figure \ref{fig: compare}. The following are the allocations for each mechanism.

\begin{itemize}
    \item \textbf{modified TTC}: $x_{1}=h_{1}$, $x_{2}=h_{3}$, $x_{3}=h_{2}$, $x_{4}=h_{5}$, and $x_{5}=h_{4}$.
    \item \textbf{Leave and Share (LaS)}: $x_{1}=h_{4}$, $x_{2}=h_{3}$, $x_{3}=h_{2}$, $x_{4}=h_{1}$, and $x_{5}=h_{5}$.
    \item \textbf{YRMH-IGYT}: $x_{1}=h_{4}$, $x_{2}=h_{3}$, $x_{3}=h_{2}$, $x_{4}=h_{1}$, and $x_{5}=h_{5}$.
    \item \textbf{TTCD}: $x_{1}=h_{5}$, $x_{2}=h_{3}$, $x_{3}=h_{2}$, $x_{4}=h_{1}$, and $x_{5}=h_{4}$.
\end{itemize}

Although TTCD does not always promise a Pareto efficient allocation, in Figure \ref{fig: compare}, the allocation of TTCD Pareto dominates the allocations of other existing mechanisms.

\subsection{Simulations}

We evaluate the performance of the mechanism by two criteria, the total number of swaps and the average improvement of each agent. The understanding of the number of swaps is intuitive, it indicates how many agents exchange their houses with others. As we discussed in Section \ref{sec: properties of TTCD}, the organizer may aim to maximize the number of swaps. The second criterion, the average improvement of each agent, reflects how far the allocated house is from the initial house. For instance, agent $1$'s preference is $h_{3} \succ_{1} h_{2} \succ_{1} h_{1}$. If he is allocated $h_{3}$ which is in the $1^{st}$ position of his preference and his initial house $h_{1}$ is in the $3^{rd}$ position, hence, he has made a $3-1=2$ position improvement. It is worth mentioning that all existing mechanisms are IR, and therefore, it is impossible for position improvements to be negative.

Moreover, we analyze the performance of the matching mechanism under the different sizes of tree networks. In order to keep consistency, we generate 50 random networks for each different scale of agents.

\begin{figure}[htp!]
    \centering
    \includegraphics[width=0.45\textwidth]{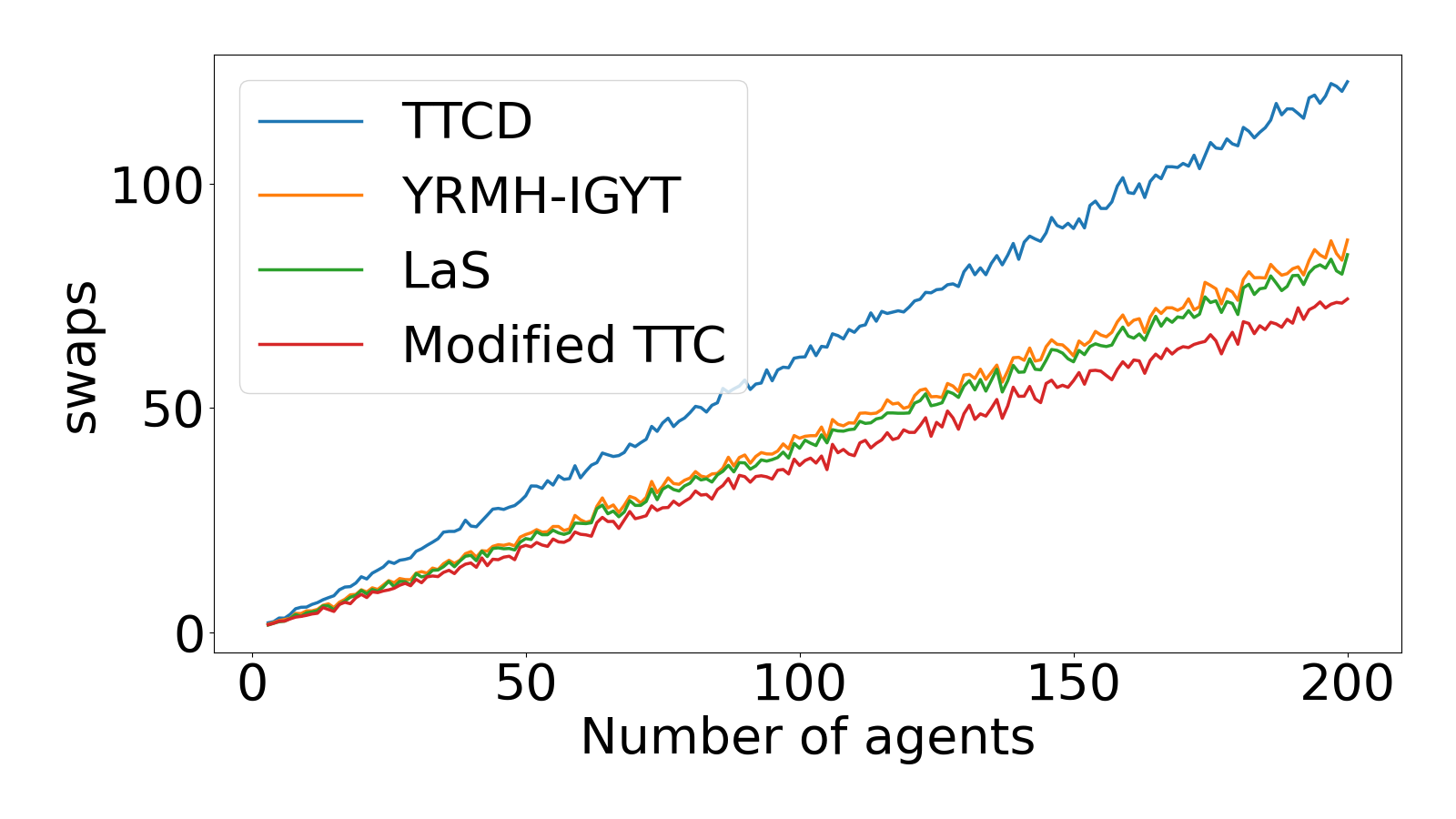}
    \caption{Total number of swaps with different sizes of networks.}
    \label{fig: empirical swaps}
\end{figure}

Figure \ref{fig: empirical swaps} illustrates the performance of four mechanisms in terms of the number of swaps. As modified TTC only allows agents to swap with their parents and descendants, it might be difficult to form a trading cycle with others. As a result, some agents keep their initial houses, leading to a lower number of swaps in the modified TTC. 

Although the restriction of LaS (allowing swaps with parents and children) is stricter than that of modified TTC, it reconstructs the network after removing certain agents, which enlarges some agents' availability. 

Additionally, YRMH-IGYT works similarly to LaS but enlarges the preference domain by allowing agents to select houses owned by their ancestors. Therefore, it generates more swaps than LaS.

In comparison to these mechanisms, TTCD has the least restriction on the preference domain, which results in more swaps, as evident in Figure \ref{fig: empirical swaps}. This observation also supports Lemma \ref{lemma: swap}.

\begin{figure}[htp!]
    \centering
    \includegraphics[width=0.45\textwidth]{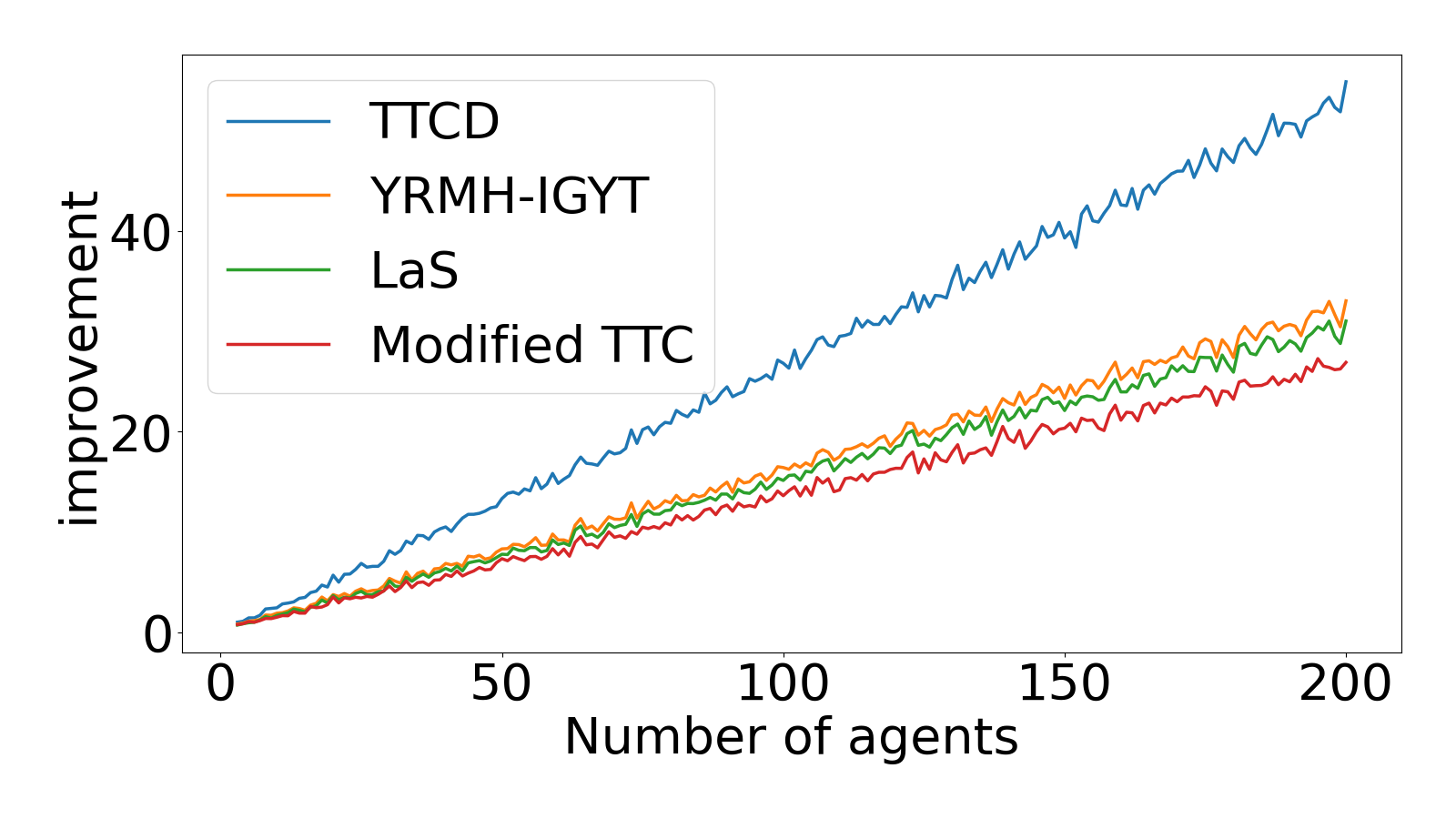}
    \caption{Average number of position improvements for each agent with different sizes of networks.}
    \label{fig: empirical improvement}
\end{figure}

Figure \ref{fig: empirical improvement} shows the improvement in allocation for each agent. As previously stated, the other three mechanisms restrict all agents' preference domains; hence, the probability of forming a large trading cycle is low, and it is impossible for each agent to obtain the most preferred house. Consequently, the position improvement of each agent under these mechanisms is also limited. However, under TTCD, agents are allowed to select houses owned by their ancestors or descendants, which is not possible in the other three mechanisms. Thus, TTCD also outperforms all other mechanisms on position improvements.

\section{Conclusion}
\label{sec: conclusion}

In this paper, we study a matching mechanism over a networked housing market and propose a novel mechanism called Top Trading Cycle with Diffusion (TTCD). In other existing matching mechanisms, they limit all agents' preference domains in order to ensure the truthfulness of the mechanism. We characterize the possible competitions between inviters and invitees, resulting in an untruthful mechanism. Under TTCD, we update the policy based on the traditional TTC in order to avoid all those competitions. As a result, TTCD is strategy-proof which minimizes the restrictions on the preference domain. Besides other desirable properties, it maximizes the agents' satisfaction and the number of swaps.

Promising future work includes considering an allocation problem over social networks with monetary transfers and budget.

\bibliographystyle{named}
\bibliography{ijcai23}

\clearpage

\section{Appendix}

\subsection{TTCD for graph networks}
\begin{definition}[Top Trading Cycle with Diffusion for graph networks]
    TTCD for graph networks works as follows
    \begin{enumerate}
        \item each agent $i \in r_{o}$ (agents invited by the organizer $o$) points to the most preferred house owned by his \textbf{siblings}, \textbf{himself} or \textbf{descendants}
    
        \item each agent $i \in N \setminus r_{o}$ points to the most preferred house owned by his \textbf{ancestors}, \textbf{himself} or \textbf{descendants}

        \item if agents $i$ and $j \in descendant(i)$ point to the same house owned by \textbf{agent $k \in ancestor(i)$}, update agent $j$ points to his next preferred house (note this does \textbf{not} hold if agent $j \in sibling(k)$)

        \item if agents $i$ and $j \in descendant(i)$ point to the same house owned by \textbf{agent $k \in descendant(j)$}, update agent $i$ points to his next preferred house

        \item if agents $i$ and $j \in \{sibling(i) \cap descendant(i)\}$ point to the same house owned by \textbf{agent $k \in \{descendant(i) \cap ancestor(j)\}$}, update agent $j$ points to his next preferred house

        \item if agents $i$ and $j \in descendant(i)$ point to the same house owned by \textbf{agent $k \in descendant(i)$ and $k \notin descendant(j)$}, then agent $k$ points to his most preferred house, and such house owner points to his most preferred house iteratively with the following rules until a cycle is formed
        \begin{itemize}
            \item if an agent points to a house owned by \textbf{agent $x \in \{i, ancestor(i)\}$}, \textbf{agent} $x$ points to the most preferred house owned by \textbf{agent} $i$ or $ancestor(i)$
            \item if an agent points to a house owned by \textbf{agent $x \in \{j, descendant(j)\}$}, \textbf{agent} $x$ points to the most preferred house owned by \textbf{agent} $j$ or $descendant(j)$
        \end{itemize}

        \item repeat steps $3$, $4$, and $5$ until there are no conflicts

        \item there must exist at least one cycle with a minimum length $1$

        \item for each cycle, assign each house to the agent pointing to it and remove the cycle from the market
        
        \item return to steps $1$ and $2$ until no agents remain in the market
    \end{enumerate}
\end{definition}

\subsubsection{TTCD for graph network is strategy-proof.}
\begin{proof}
    We update two rules (steps 3 \& 5) into TTCD. Such rules are used to prevent the following special cases.
    \begin{figure}[htp!]
        \centering
        \includegraphics[width=0.1\textwidth]{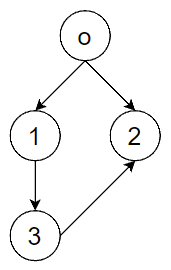}
        \caption{Graph network. Relationship: $r_{o}=\{1, 2\}$, $r_{1}=\{3\}$,$r_{2}=\emptyset$, and $r_{3}=\{2\}$.}
        \label{fig: graph}
    \end{figure}

    \textbf{(Case 1)} Considering the graph network in Figure \ref{fig: graph} with preferences
    \begin{itemize}
        \item $h_{2} \succ_{1} h_{1}$,
        \item $h_{1} \succ_{2} h_{2}$,
        \item $h_{1} \succ_{3} h_{3}$.
    \end{itemize}

    Agent $2$ is the sibling of agent $1$ and also his descendant. If both agents $2$ and $3$ compete for $h_{1}$, and the mechanism restricts the preference of agent $2$ (step 3 in TTCD for tree network), agent $2$ can reject the invitation from agent $3$ or agent $1$ can misreport his children to improve their utilities. Therefore, we update step $3$ (the rule does not hold if agents $1$ and $2$ are siblings).
    
    \textbf{(Case 2)} Considering the graph network in Figure \ref{fig: graph} with preferences
    \begin{itemize}
        \item $h_{3} \succ_{1} h_{2} \succ_{1} h_{1}$,
        \item $h_{3} \succ_{2} h_{1} \succ_{2} h_{2}$,
        \item $h_{2} \succ_{3} h_{3}$.
    \end{itemize}

    If the mechanism allows both agents $1$ and $2$ to compete for $h_{3}$, based on TTCD for tree network, the allocation is $\{h_{1}, h_{3}, h_{2} \}$. However, if agent $1$ misreports agent $3$, resulting in only agents $1$ and $2$ in the network, agents $1$ and $2$ swap their houses. For agent $1$, the allocation is better by misreporting. Therefore, we add one new rule (step 5) to ensure the mechanism is strategy-proof.

    The rest of the proof is the same as that of TTCD for tree networks.
    
\end{proof}

\subsection{Proof of Theorem \ref{thm: impossibility 1}}
\begin{proof}
    \begin{figure}[htp!]
        \centering
        \includegraphics[width=0.3\textwidth]{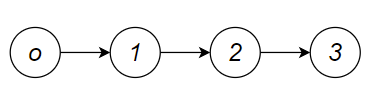}
        \caption{The example to prove Theorem \ref{thm: impossibility 1}. Relationship: $r_{o}=\{1\}$, $r_{1}=\{2\}$, $r_{2}=\{3\}$ and $r_{3}=\emptyset$. Preference: $h_{3} \succ_{1} h_{2} \succ_{1} h_{1}$, $h_{1} \succ_{2} h_{2} \succ_{2} h_{3}$ and $h_{1} \succ_{3} h_{3} \succ_{3} h_{2}$.}
        \label{fig: impossibility 1}
    \end{figure}

    Consider the example given in Figure \ref{fig: impossibility 1}. There are $6$ possible allocations, which are
    \begin{enumerate}
        \item $x_{1} = h_{1}$, $x_{2}=h_{2}$ and $x_{3}=h_{3}$.
        \item $x_{1} = h_{1}$, $x_{2}=h_{3}$ and $x_{3}=h_{2}$.
        \item $x_{1} = h_{2}$, $x_{2}=h_{1}$ and $x_{3}=h_{3}$.
        \item $x_{1} = h_{2}$, $x_{2}=h_{3}$ and $x_{3}=h_{1}$.
        \item $x_{1} = h_{3}$, $x_{2}=h_{1}$ and $x_{3}=h_{2}$.
        \item $x_{1} = h_{3}$, $x_{2}=h_{2}$ and $x_{3}=h_{1}$.
    \end{enumerate}

    Obviously, allocations $(2)$, $(4)$, and $(5)$ fail to achieve IR, as some agents are worse off from joining the barter market. For instance, given the allocation $(2)$, agent $3$ might refuse to join the market and keep $h_{3}$.

    Moreover, allocation $(3)$ Pareto dominates allocation $(1)$, as both agents $1$ and $2$ can have a better result in allocation $(3)$. There exist two Pareto optimal allocations, which are allocations $(3)$ and $(6)$.

    However, given the allocation $(6)$, agent $2$ can have a better allocation by not inviting agent $3$ and forcing agent $1$ to exchange with him, which is allocation $(3)$. As a result, a mechanism that outputs allocation $(6)$ is not SP.

    Now we consider the mechanism outputs the allocation $(3)$. According to the preference list, there always exists a cycle between agents $1$ and $3$. In order to output the allocation $3$, the mechanism has to force agent $3$ pointing other houses rather than $h_{1}$. Therefore, allocation $(3)$ can only be obtained from a non-SP mechanism or restricting the preference domain of particular agents; for example, agent $3$ can only choose $h_{3}$.
    
    More similar results can be found in \cite{kawasaki2021mechanism,yang2022one}. \cite{kawasaki2021mechanism} show that Top Trading Cycle fails to achieve IR, SP, and PE simultaneously in a social network. \cite{yang2022one} prove there is no SP mechanism that outputs a Pareto optimal allocation over a networked housing market.
\end{proof}

\subsection{Proof of Theorem \ref{thm: impossibility 2}}
\begin{proof}
    (\textbf{A house owned by an ancestor.}) Consider the example given in Figure \ref{fig: impossibility 1}. If a mechanism allows both agents $2$ and $3$ to compete for $h_{1}$, agent $2$ may misreport $r_{2}^{\prime}=\emptyset$, so that no one can compete $h_{1}$ with him and he is allocated $h_{1}$ in the end, which is better than $h_{2}$.
    
    Therefore, it is beneficial for agents to misreport if there exists a competition between agents and their descendants in a house owned by their ancestors, which contradicts SP.

    (\textbf{A house owned by a descendant.}) Consider the example given in Figure \ref{fig: impossibility 1} with a preference list $h_{3} \succ_{2} h_{1} \succ_{2} h_{2}$ for agent $2$. If a mechanism allows both agents $1$ and $2$ to compete for $h_{3}$, agent $2$ may misreport $r_{2}=\emptyset$ in order to be allocated $h_{1}$, which is better than $h_{2}$.

    (\textbf{A house owned by an agent between two competitors})

    \begin{figure}[htp!]
        \centering
        \includegraphics[width=0.3\textwidth]{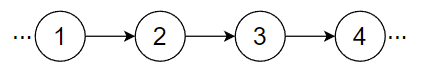}
        \caption{Preference: $h_{2} \succ_{1} h_{1}$, $h_{4} \succ_{2} h_{3} \succ_{2} h_{2}$, $h_{2} \succ_{3} h_{3}$ and $h_{1} \succ_{4} h_{4}$.}
        \label{fig: impossibility between}
    \end{figure}

    Consider the example given in Figure \ref{fig: impossibility between}. Both agents $1$ and $3$ prefer $h_{2}$. If agent $3$ invites agent $4$, $h_{2}$ is allocated to agent $1$. Otherwise, agent $3$ obtains $h_{2}$. As a result, agent $3$ never invites others.
    
    (\textbf{A house owned by an agent in another chain.})
    
    \begin{figure}[htp!]
        \centering
        \includegraphics[width=0.3\textwidth]{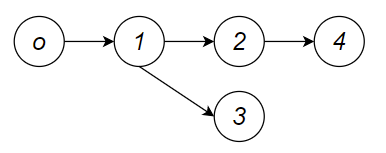}
        \caption{Preference: $h_{4} \succ_{1} h_{3} \succ_{1} h_{2} \succ_{1} h_{1}$, $h_{3} \succ_{2} h_{1} \succ_{2} h_{2} \succ_{2} h_{4}$, $h_{4} \succ_{3} h_{1} \succ_{3} h_{2} \succ_{3} h_{3}$ and $h_{3} \succ_{4} h_{2} \succ_{4} h_{4} \succ_{4} h_{1}$.}
        \label{fig: impossibility 2}
    \end{figure}
    
    Consider the example given in Figure \ref{fig: impossibility 2}. If a mechanism allows both agents $1$ and $3$ to compete for $h_{4}$, agent $1$ can misreport $r_{1}^{\prime}=2$, hence, agent $3$ is not in the market and no one can compete $h_{4}$ with him. 

    Therefore, it is beneficial for agents to misreport if there exists a competition between agents and their descendants in a house owned by an agent in another chain, which contradicts SP.
\end{proof}

\subsection{Proof of Lemma \ref{lemma: core}}
\begin{proof}
    We prove Lemma \ref{lemma: core} by contradiction. Assume there exists a set of agents $S$ in a path $p_{i}$ ($S \subseteq P_{i}$) such that at least one of them has a better allocation without influencing others than under TTCD (e.g. $y_{i}(\theta) \succeq_{i} x_{i}(\theta)$ for all $i \in S$ and $y_{j}(\theta) \succ_{j} x_{j}(\theta)$ for some $j \in S$). 

    We start with the case with $|S|=2$. Note that an agent can only join the system via referrals. Thus, agents in the coalition group $S$ are fully connected to each other. Consider the case that agents $i$ and $j$ (S=\{i,j\}) form a blocking coalition group, and $r_{i}=\{j\}$. By Definition \ref{def: block coalition}, any one of the following holds
    \begin{enumerate}[i.]
        \item $y_{i}(\theta) \succ_{i} x_{i}(\theta)$ and $y_{j}(\theta) \succeq_{j} x_{j}(\theta)$,
        \item $y_{i}(\theta) \succeq_{i} x_{i}(\theta)$ and $y_{j}(\theta) \succ_{j} x_{j}(\theta)$.
    \end{enumerate}

    As both agents $i$ and $j$ form a blocking pair, they are in one trading cycle. Hence, the only solution is to exchange their houses such that $y_{i}=h_{j}$ and $y_{j}=h_{i}$. Furthermore, we can derive the preference list is
    \begin{enumerate}[i.]
        \item $h_{j} \succ_{i} h_{i}$ and $h_{i} \succeq_{j} h_{j}$,
        \item $h_{j} \succeq_{i} h_{i}$ and $h_{i} \succ_{j} h_{j}$.
    \end{enumerate}

    Based on the preference list, under TTCD, agent $i$ also points to $h_{j}$ and agent $j$ points to $h_{i}$ at the same time, the allocation is the same as that in the blocking pair, which contradicts the definition of blocking coalition.

    Due to space constraints, we omit the proof of the case $|S|>2$, which is similar to $|S|=2$. For example, if $S=\{i, j, k\}$ with $r_{i}=j$, $r_{j}=k$, since under TTCD, agents can also point to the house owned by his ancestor, we can consider agents $i$ and $j$ as an agent $i^{\prime}$ with $h_{i^{\prime}}$ and $r_{i^{\prime}}=k$. $h_{i^{\prime}} = h_{i}$ if $h_{i} \succ_{k} h_{j}$; otherwise, $h_{i^{\prime}} = h_{j}$. Then, the problem goes back to $|S|=2$.
    
\end{proof}

\subsection{Proof of Lemma \ref{lemma: swap}}
\begin{proof}
    The main difference between TTCD and modified TTC is the preference domain. Specifically, modified TTC only allows agents to point to a house owned by their parents, descendants, and themselves. However, under TTCD, agents can point to any house owned by their ancestors, descendants, and themselves. Intuitively, the set of ancestors is greater than the set of parents for each agent. Moreover, TTCD also allows agents who are invited by the organizer to select any house owned by their siblings.

    Hence, agents who remain unchanged under modified TTC may be allocated to a better house under TTCD, which increases the number of swaps. 
\end{proof}

\end{document}